\documentclass[11pt]{article}

\usepackage[utf8]{inputenc}
\usepackage[T1]{fontenc}
\usepackage[margin=1in]{geometry}
\usepackage{amsmath,amssymb,amsthm,mathtools}
\usepackage{mathrsfs}
\usepackage{booktabs}
\usepackage{array}
\usepackage{enumitem}
\usepackage{hyperref}
\usepackage{xcolor}
\hypersetup{
  colorlinks=true,
  linkcolor=red,
  citecolor=red,
  urlcolor=red
}

\theoremstyle{plain}
\newtheorem{theorem}{Theorem}

\theoremstyle{definition}

\theoremstyle{remark}

\newcommand{\R}{\mathbb{R}}
\newcommand{\Z}{\mathbb{Z}}
\newcommand{\C}{\mathbb{C}}
\newcommand{\HH}{\mathbb{H}}
\newcommand{\sltwo}{\mathfrak{sl}(2,\R)}
\newcommand{\SL}{\mathrm{SL}}
\newcommand{\SO}{\mathrm{SO}}
\newcommand{\Spin}{\mathrm{Spin}}
\newcommand{\SU}{\mathrm{SU}}
\newcommand{\Mp}{\mathrm{Mp}}
\newcommand{\Kt}{\widetilde{K}}
\newcommand{\Gt}{\widetilde{G}}
\newcommand{\SLt}{\widetilde{\mathrm{SL}}}
\newcommand{\Sigmah}{\widehat{\Sigma}}
\newcommand{\OO}{\mathcal{O}}
\newcommand{\OOh}{\widehat{\mathcal{O}}}
\newcommand{\D}{\mathcal{D}}
\newcommand{\DD}{\mathbb{D}}
\newcommand{\Hil}{\mathcal{H}}
\DeclareMathOperator{\tr}{tr}

\newcommand{\ii}{\mathrm{i}}

\title{Symmetry Regularization of 1D Generalized Coulomb Problems}
\author{Zhanqiang Bai\thanks{School of Mathematical Sciences, Soochow University, Suzhou 215006, China; zqbai@suda.edu.cn}
\and Junwei Ma\thanks{Department of Mathematics, Hong Kong University of Science and Technology, Clear Water Bay, Hong Kong, China; jmaas@connect.ust.hk}
\and Guowu Meng\thanks{Department of Mathematics, Hong Kong University of Science and Technology, Clear Water Bay, Hong Kong, China; mameng@ust.hk}}
\date{June 6, 2026}

\begin{document}

\maketitle

\begin{abstract}
For the 1D generalized Coulomb problems---a family that includes the quantizations of the 1D generalized Kepler problems of Ma--Meng--Xiao~\cite{MMX2025}---we construct two explicit unitary intertwiners $\hat\iota_{\pm}$, the quantum analogs of the classical symmetry regularization maps $\iota_{\pm}$ of~\cite{MMX2025}, that unitarily identify each $H_{\kappa}$-energy-definite portion of the Hilbert space $L^{2}(\R_{>0},\mathrm{d}q)$ with a unitary lowest-weight representation of $\SLt(2,\R)$.
\end{abstract}

\medskip
\noindent\textbf{Keywords.} Symmetry regularization; one-dimensional generalized Coulomb problems; unitary lowest-weight representations; hidden symmetry; intertwiners; S-duality.

\medskip
\noindent\textbf{Mathematics Subject Classification (2020).} 81S10 (primary); 22E45, 81Q05 (secondary).

\tableofcontents

\section{Introduction}

For the classical Kepler problem, collision motions blow up in finite time;
since the 1960s, various \emph{regularization methods}~\cite{KustaanheimoStiefel1965,Moser1970} have been devised to
complete the dynamics on a suitable enlargement of phase space. Considering
that symmetry is more fundamental than dynamics, the third author proposed
in~\cite{Meng2023} an alternative notion, \emph{symmetry regularization},
which shifts the criterion from dynamical completeness to symmetry
completeness: the hidden symmetry of each energy-definite portion of the phase space should extend to a global group action on a manifold containing that portion as a dense open subset.

A concrete setting is provided by the \emph{1D generalized Kepler problems},
a family of systems on the phase space $\Sigma:=T^{*}\R_{>0}$ parametrized by
a magnetic charge $\mu\ge 0$, with symplectic form
$\mathrm{d}p\wedge \mathrm{d}q$ and
Hamiltonian\footnote{The symbol $H$ is used in this paper for the classical Hamiltonian
\eqref{eq:Hcl} and its quantum analog $H_{\kappa}$ of~\eqref{eq:Hkappa}; the same
letter also denotes the standard non-compact Cartan generator of $\sltwo$ introduced
in Section~\ref{ssec:four-models} (see~\eqref{eq:epm} and the surrounding text).
The intended meaning will always be clear from context.}
\begin{equation}\label{eq:Hcl}
  H \;=\; \tfrac12 p^{2}\;+\;\frac{\mu^{2}}{2q^{2}}\;-\;\frac{1}{q},
  \qquad (q,p)\in\R_{>0}\times\R,
\end{equation}
originally introduced by Meng in~\cite[Ch.~15]{Meng2024Oziewicz} with a
slightly different normalization. When $\mu>0$ the centrifugal barrier
$\mu^{2}/(2q^{2})$ already prevents collisions at $q=0$, but for $\mu=0$
collisions can occur in finite time and the system requires regularization.
In~\cite{MMX2025}, Ma--Meng--Xiao construct a pair of symplectic embeddings
\[
  \iota_{-}\colon\Sigma_{-}\longrightarrow\OO_{-}:=\OO_{\mu},\qquad
  \iota_{+}\colon\Sigma_{+}\longrightarrow\OO_{+}:=\OO_{\mu/2},
\]
of the positive- and negative-energy subsets $\Sigma_{\pm}:=\{\pm H>0\}$ into coadjoint orbits of
$\sltwo$ (elliptic for $\mu>0$, regular nilpotent for $\mu=0$). Each $\iota_{\pm}$ has dense open image and intertwines the Kepler
flow with a \emph{complete} Hamiltonian flow on $\OO_{\pm}$, thereby achieving both dynamical and symmetry completeness.

The purpose of the present article is to work out the \textbf{quantum
counterpart} of this picture. The \emph{1D generalized Coulomb problems},
first introduced in~\cite{Meng2014}, are a one-parameter family of
Schr\"odinger operators $H_{\kappa}$ on $L^{2}(\R_{>0},\mathrm{d}q)$ indexed
by $\kappa>0$ that includes the quantizations of the above classical systems
as special cases. (We adopt
the term ``Coulomb'' for the quantum side because of the appearance of the Coulomb potential $-1/q$
in the Hamiltonian operator $H_{\kappa}$.)
The quantum analogs of $\iota_{\pm}$ unitarily identify each
$H_{\kappa}$-energy-definite portion of the Hilbert space
$L^{2}(\R_{>0},\mathrm{d}q)$ with a unitary lowest-weight representation
of $\SLt(2,\R)$. The
overall picture matches the classical one row-by-row:
\begin{center}
\renewcommand{\arraystretch}{1.25}
\begin{tabular}{@{}ll@{}}
\toprule
Classical & Quantum \\
\midrule
$\Sigma_{-},\;\Sigma_{+}$
  & $\Sigmah_{-},\;\Sigmah_{+}$\\
$\OO_{-}=\OO_{\mu}$
  & $\OOh_{-}=\D^{+}_{\kappa}$\\
$\OO_{+}=\OO_{\mu/2}$
  & $\OOh_{+}=\D^{+}_{(\kappa+1)/2}$\\
$\iota_{-}\colon\Sigma_{-}\!\to\!\OO_{-}$
  & $\hat\iota_{-}\colon\Sigmah_{-}\!\to\!\OOh_{-}$ \\
$\iota_{+}\colon\Sigma_{+}\!\to\!\OO_{+}$
  & $\hat\iota_{+}\colon\Sigmah_{+}\!\to\!\OOh_{+}$\\
\bottomrule
\end{tabular}
\end{center}
\noindent The quantum column is valid for all $\kappa>0$; the classical
counterpart exists only when $\kappa\ge 1$ (i.e.\ $\mu\ge 0$, where
$\kappa=2\mu+1$). See~Section~\ref{sec:Hkappa} for the full convention. Since quantization is not a functor, the classical column serves as motivation rather than a recipe: passing from the classical to the quantum column requires genuinely new constructions, not a straightforward application of a quantization procedure.

The organization of this article is as follows.
Section~\ref{sec:coadjoint} collects the relevant facts about the
elliptic coadjoint orbits $\OO_{\mu}$ of $\sltwo$, their identification with the
upper half-plane $\HH$, and the geometric quantization that produces the
holomorphic discrete series $\D^{+}_{2\mu+1}$.
Section~\ref{sec:classical-reg} reviews the classical
symmetry regularization maps $\iota_{\pm}$ of~\cite{MMX2025} and explains why
\emph{S-duality} is the appropriate name for them.
Section~\ref{sec:discrete} recalls the representation theory needed on the quantum side: the universal
cover $\SLt(2,\R)$, the unitary lowest-weight representations
$\D^{+}_{\kappa}$ for $\kappa>0$, four concrete models (two half-line, two
Bergman), and, in the range $\kappa\ge 1$, their interpretation via the orbit
method.
Section~\ref{sec:Hkappa} defines the 1D generalized Coulomb operator $H_{\kappa}$,
derives its scattering and bound-state (generalized) eigenfunctions
via the Kummer equation,
and introduces the subspaces $\Sigmah_{\pm}$ as the quantum
analogs of $\Sigma_{\pm}$.
Section~\ref{sec:quantum-reg} states and proves the main result
(Theorem~\ref{thm:iotahat}): for each sign of the energy, a unitary
isomorphism $\hat\iota_{\pm}\colon\Sigmah_{\pm}\to\OOh_{\pm}$ is
constructed by matching the spectral data of $|2H_{\kappa}|^{-1/2}$ on
$\Sigmah_{\pm}$ with that of the corresponding $\sltwo$-generator on the
target representation.
We conclude in Section~\ref{sec:concluding} with remarks on
higher-dimensional generalizations and connections with the work of Fock
and Bander--Itzykson.

\section{Coadjoint orbits of \texorpdfstring{$\sltwo$}{sl(2,R)}}\label{sec:coadjoint}

This section describes the elliptic coadjoint orbits $\OO_{\mu}$ of $\mathfrak{g}:=\sltwo$, their identification with the Poincar\'e
disk and the upper half-plane, and the geometric quantization (with half-form
correction) that produces the holomorphic discrete series
$\D^{+}_{2\mu+1}$ of $\SLt(2,\R)$.

On $\mathfrak g$ we use the invariant pseudo-Euclidean form
$(X,Y):=-2\tr(XY)$ and the basis
\[
  v_{1}=\tfrac12\begin{pmatrix}1&0\\0&-1\end{pmatrix},\quad
  v_{2}=\tfrac12\begin{pmatrix}0&1\\1&0\end{pmatrix},\quad
  v_{0}=\tfrac12\begin{pmatrix}0&1\\-1&0\end{pmatrix},
\]
with $(v_{1},v_{1})=(v_{2},v_{2})=-1$ and $(v_{0},v_{0})=+1$.
Pairing with $v_{i}$ yields the coordinate function $x_{i}$ on $\mathfrak{g}^{*}$,
so this identifies $\mathfrak{g}^{*}\cong\R^{1,2}$.

For $\mu\ge 0$ set
\[
  \OO_{\mu}\;:=\;\bigl\{\,x\in\R^{1,2}\ \big|\ x_{0}^{2}-x_{1}^{2}-x_{2}^{2}=\mu^{2},\;x_{0}>0\,\bigr\}.
\]
For $\mu=0$ this is the future light cone (regular nilpotent orbit); for $\mu>0$
it is the upper sheet of a two-sheeted hyperboloid (elliptic orbit).

The Kirillov--Kostant--Souriau symplectic form on $\OO_{\mu}$
(see~\cite[Ch.~1]{Kirillov2004} for the general construction) is
\[
  \omega\;=\;\frac{\mathrm{d}x_{1}\wedge \mathrm{d}x_{2}}{x_{0}}.
\]
This is the \emph{opposite} sign convention from~\cite{MMX2025}, where
$\omega_{\text{MMX}}=\mathrm{d}x_{2}\wedge\mathrm{d}x_{1}/x_{0}=-\omega$;
our choice here ensures that $\omega$ pulls back to a positive multiple of the
hyperbolic area form on~$\HH$ (see below). 

\subsection{Identification of \texorpdfstring{$\OO_{\mu}$}{O\_mu} (\texorpdfstring{$\mu>0$}{mu>0}) with the upper half-plane
  and geometric quantization}

For $\mu>0$, let $G:=\SL(2,\R)$ act on $\OO_{\mu}$ via the coadjoint action and let
$K:=\SO(2)\subset G$ be the stabilizer of $(\mu,0,0)\in\OO_{\mu}$.
The $G$-equivariant identification of $\OO_{\mu}\cong G/K$ with the
Poincar\'e disk $\DD=\{w\in\C:|w|<1\}$ sends a point
$(x_{0},x_{1},x_{2})\in\OO_{\mu}$ to
\[
  w\;=\;\frac{x_{1}+\ii\,x_{2}}{x_{0}+\mu}\;\in\;\DD,
\]
mapping the apex $(\mu,0,0)$ to the origin and intertwining the coadjoint
action with the $G$-action on $\DD$.
(Equivalently, this is the stereographic projection from $(-\mu,0,0)$,
rescaled by $1/\mu$ so that the image is the \emph{unit} disk rather than
a disk of radius~$\mu$.)
Composing with the Cayley map
\[
  z\;=\;\ii\,\frac{1+w}{1-w},
\]
the standard biholomorphism from $\DD$ to the upper
half-plane $\HH=\{z\in\C:\Im z>0\}$, gives an identification
$\OO_{\mu}\xrightarrow{\;\sim\;}\HH$.

Under the composite identification $\OO_{\mu}\cong\HH$, the
Kirillov--Kostant--Souriau form becomes
\[
  \omega\;\longmapsto\;
  \omega_{\mu}:=\mu\,\frac{\mathrm{d}x\wedge\mathrm{d}y}{y^{2}},
  \qquad z=x+\ii y\in\HH.
\]
That is, the KKS form on $\OO_{\mu}$ is identified with $\mu$ times the standard
$\SL(2,\R)$-invariant area form on $\HH$.

Equip $\HH$ with the symplectic form $\omega_{\mu}$
and the K\"ahler polarization given by its complex structure.
The prequantum line bundle $L_{\mu}\to\HH$ is the Hermitian line bundle
with curvature $\omega_{\mu}$. Trivializing sections as holomorphic functions
on $\HH$, the $\SLt(2,\R)$-action carries M\"obius weight~$2\mu$:
\[
  (g\cdot f)(z)
  =(cz+d)^{-2\mu}\,f\!\left(\frac{az+b}{cz+d}\right),\qquad
  g^{-1}=\begin{pmatrix}a&b\\c&d\end{pmatrix},
\]
and the natural $L^{2}$-condition on sections of $L_{\mu}$ gives the
weighted Bergman space $\Hil^{\HH}_{2\mu}$
(squared norm $\|f\|^{2}=\int_{\HH}|f|^{2}\,y^{2\mu-2}\,\mathrm{d}x\,\mathrm{d}y$).
The canonical bundle $\mathcal{K}=T^{*(1,0)}\HH$ of this polarization has sections $g(z)\,\mathrm{d}z$
with weight~$2$ (since $\mathrm{d}((az+b)/(cz+d))
=(cz+d)^{-2}\,\mathrm{d}z$), so the half-form bundle $\mathcal{K}^{1/2}$
carries weight~$1$.
Geometric quantization with the half-form (metaplectic) correction
(see~\cite{Sniatycki1980,Woodhouse1991} for textbook
treatments) replaces $L_{\mu}$ by $L_{\mu}\otimes \mathcal{K}^{1/2}$, shifting
the M\"obius weight from $2\mu$ to $2\mu+1$. The resulting representation is the
\emph{holomorphic discrete series representation}
$\D_{2\mu+1}^{+}$ of $\SLt(2,\R)$ (made precise in Section~\ref{ssec:disc-series}
below), with Hilbert space
\[
  \Hil^{\HH}_{2\mu+1}\;=\;\Bigl\{\,f\in\mathcal{O}(\HH)\;\Big|\;
  \int_{\HH}|f(z)|^{2}\,y^{2\mu-1}\,\mathrm{d}x\,\mathrm{d}y<\infty\Bigr\},
\]
and the $\SLt(2,\R)$-action is the weight-$(2\mu+1)$ M\"obius action lifted to
the universal cover.
Equivalently, in the Poincar\'e disk model $\DD$ (via the Cayley map
$w=(z-\ii)/(z+\ii)$), the Hilbert space is
\[
  \Hil^{\DD}_{2\mu+1}\;=\;\Bigl\{\,g\in\mathcal{O}(\DD)\;\Big|\;
  \int_{\DD}|g(w)|^{2}\,(1-|w|^{2})^{2\mu-1}\,\mathrm{d}u\,\mathrm{d}v<\infty\Bigr\},
  \qquad w=u+\ii v.
\]

As $\mu\to 0^{+}$, the elliptic orbits $\OO_{\mu}$ degenerate to the regular nilpotent orbit $\OO_{0}$ and the symplectic form
$\omega_{\mu}$ degenerates to $0$. On the quantum side, the holomorphic discrete series
representations $\D_{2\mu+1}^{+}$ correspondingly degenerate to the
\emph{limit of holomorphic discrete series} $\D_{1}^{+}$. This provides a natural
\emph{quantization of the nilpotent orbit $\OO_{0}$}: one first quantizes the nearby elliptic orbits $\OO_{\mu}$
to obtain the holomorphic discrete series, then passes to the limit $\mu\to 0$.

\section{Classical symmetry regularization: the maps
  \texorpdfstring{$\iota_{\pm}:\Sigma_{\pm}\to\OO_{\pm}$}{iota±:Σ±→O±}}\label{sec:classical-reg}

We now recall the two embeddings of Ma--Meng--Xiao~\cite{MMX2025}. Both are symplectic embeddings with dense open image in their respective target orbits. Throughout
this section, $\OO_{-}:=\OO_{\mu}$ and $\OO_{+}:=\OO_{\mu/2}$, and $H$ denotes
the classical Hamiltonian~\eqref{eq:Hcl}.

\subsection{The negative-energy embedding
  \texorpdfstring{$\iota_{-}:\Sigma_{-}\to\OO_{-}$}{iota\_-}}

Define first $\tilde\iota_{-}:\Sigma_{-}\to\OO_{\mu}$ by
\begin{equation}\label{eq:tildeiotaminus}
  \xi_{0}\;=\;\frac{1}{\sqrt{-2H}},\qquad
  \xi_{1}\;=\;-\,\frac{1+2Hq}{\sqrt{-2H}},\qquad
  \xi_{2}\;=\;qp,
\end{equation}
a diffeomorphism onto the dense open set
$U^{-}_{\mu}=\{(x_{0},x_{1},x_{2})\in\OO_{\mu}\colon x_{0}+x_{1}\neq 0\}$
(equal to all of $\OO_{\mu}$ when $\mu>0$). Then applying an elliptic twist $\tau_{-}$ by
$\varphi=\xi_{2}/\xi_{0}=\sqrt{-2H}\,qp$ gives:
\begin{equation}\label{eq:iotaminus}
  \iota_{-}\;=\;\bigl(\xi_{0},\;\xi_{1}\cos\varphi-\xi_{2}\sin\varphi,\;\xi_{1}\sin\varphi+\xi_{2}\cos\varphi\bigr).
\end{equation}

The map $\iota_{-}$ satisfies $\iota_{-}^{*}\omega_{\text{MMX}}=\mathrm{d}p\wedge \mathrm{d}q$ (symplectic
embedding,~\cite[Prop.~1]{MMX2025}); equivalently,
$\iota_{-}^{*}\omega=-\mathrm{d}p\wedge \mathrm{d}q=\mathrm{d}q\wedge \mathrm{d}p$ in our sign convention.
The map is one-to-one because of a lemma of Ligon--Schaaf~\cite{LigonSchaaf1976}
(proof given also in~\cite{MMX2025}): $\theta=A\sin(\theta+\gamma)$ has a unique real
solution when $|A|\le 1$. It is in fact an embedding onto a dense open
subset of $\OO_{-}$, and pulls back the orbit Hamiltonian $h_{-}:=1/(-2x_{0}^{2})$
to $H$:
\begin{equation}\label{eq:pullbackminus}
  \boxed{\;\iota_{-}^{*}h_{-}\;=\;H.\;}
\end{equation}
Since the energy surfaces $\{h_{-}=E\}$ are compact subsets of $\OO_{-}$, the
flow is complete on $\OO_{-}$ \emph{even when $\mu=0$}, where the original
Kepler flow may blow up in finite time. This is the dynamical completion; it is also a symmetry completion, since $\OO_{-}$ carries the full $G$-action.

\subsection{The positive-energy embedding
  \texorpdfstring{$\iota_{+}:\Sigma_{+}\to\OO_{+}$}{iota\_+}}

In the positive case, $\iota_{+}$ is the composition of three maps. First,
define $\tilde\iota_{+}\colon\Sigma_{+}\to\OO_{\mu}$ by
\begin{equation}\label{eq:tildeiotaplus}
  \xi_{0}\;=\;\frac{1+2Hq}{\sqrt{2H}},\qquad
  \xi_{1}\;=\;qp,\qquad
  \xi_{2}\;=\;\frac{1}{\sqrt{2H}}.
\end{equation}
Then applying a hyperbolic twist $\tau_{+}$ by $\varphi=\xi_{1}/\xi_{2}=\sqrt{2H}\,qp$ gives:
\begin{equation}\label{eq:tauiotaplus} \tau_{+}\tilde\iota_{+}\;=\;\bigl(\xi_{0}\cosh\varphi-\xi_{1}\sinh\varphi,\;-\xi_{0}\sinh\varphi+\xi_{1}\cosh\varphi,\;\xi_{2}\bigr).
\end{equation}
Finally, apply the squaring map
\begin{equation}\label{eq:squaring}
  \pi\colon (x_{0},x_{1},x_{2})\;\longmapsto\;
  \Bigl(\tfrac{x_{0}}{2},\;\tfrac{x_{1}^{2}-x_{2}^{2}}{2\sqrt{x_{1}^{2}+x_{2}^{2}}},\;\tfrac{x_{1}x_{2}}{\sqrt{x_{1}^{2}+x_{2}^{2}}}\Bigr).
\end{equation}
The composition $\iota_{+}=\pi\,\tau_{+}\tilde\iota_{+}$ lands in
$\OO_{+}=\OO_{\mu/2}$, is symplectic (with respect to
$\omega_{\text{MMX}}$), and is a smooth embedding with dense
open image (a proper subset of $\OO_{+}$). The orbit Hamiltonian
\[
  h_{+}\;=\;\frac{1/4}{x_{1}^{2}+x_{2}^{2}-x_{1}\sqrt{x_{1}^{2}+x_{2}^{2}}}
\]
(whose denominator vanishes on the ray $\{x_{2}=0,\,x_{1}>0\}$, the
locus excluded from the image of $\iota_{+}$)
is pulled back to $H$:
\begin{equation}\label{eq:pullbackplus}
  \boxed{\;\iota_{+}^{*}h_{+}\;=\;H.\;}
\end{equation}
This is the dynamical completion (see~\cite{MMX2025} for the detailed argument); it is also a symmetry completion, since $\OO_{+}$ carries the full $G$-action.

The role of the squaring map $\pi\colon\OO_{\mu}\to\OO_{\mu/2}$ is essential here: without it, the image
of $\tau_{+}\tilde\iota_{+}$ would fill only one half of $\OO_{\mu}$
(the $x_{2}>0$ half), which is far from being an open dense subset.
The squaring map $\pi$, a $2{:}1$ ramified cover, identifies the two halves
so that the composite $\iota_{+}$ has dense open image in $\OO_{\mu/2}$. Writing $z=x_{1}+\ii x_{2}$, the
last two components of $\pi(x_{0},x_{1},x_{2})$ combine to $z^{2}/(2|z|)$; thus
the orbit parameter is halved from $\mu$ to $\mu/2$, and the Casimir
$x_{0}^{2}-x_{1}^{2}-x_{2}^{2}$ drops from $\mu^{2}$ to $\mu^{2}/4$
as a side effect of the squaring. \emph{This halving of the orbit parameter ($\mu\mapsto\mu/2$) on the positive
energy side will reappear at the quantum level as the parameter shift
$\kappa\mapsto(\kappa+1)/2$.}

\subsection{Why ``S-duality'' is the right name}

As emphasized in \cite{MMX2025}, the term ``regularization map'' is a misnomer
when $\mu>0$: $\Sigma_{\pm}$ is already dynamically complete and $\iota_{-}$
is a diffeomorphism onto $\OO_{\mu}$, so no (generalized) Kepler motion needs to be regularized.
A more accurate name is \emph{S-duality}, in the sense familiar from
gauge and string theory: a change of variables that exchanges a
description in which a symmetry is hidden for one in which it is
manifest. As explained in~\cite{MMX2025}, the maps $\iota_{\pm}$ share
two hallmarks of S-duality: (i)~they turn hidden symmetries into manifest
ones, and (ii)~they are implemented by a Fourier-type transform between
conjugate variables. The dynamical completion that occurs at $\mu=0$ is
then a byproduct, not the defining feature, of the S-duality map.

From the quantum vantage point this is even clearer: quantization of
a symplectic manifold $M$ and of a dense open submanifold $M'\subset M$
yields the \emph{same} Hilbert space (the quantum states are insensitive to
sets of measure zero). Whatever dynamical incompleteness $M'$ may suffer is
therefore invisible to the quantization, and the name ``regularization'' --- in
the sense of Moser~\cite{Moser1970} and Ligon--Schaaf~\cite{LigonSchaaf1976} ---
is unmistakably a misnomer from the quantum point of view.

\section{Unitary lowest-weight representations of \texorpdfstring{$\SLt(2,\R)$}{SL\textasciitilde(2,R)}}\label{sec:discrete}

This section introduces the unitary lowest-weight representations
$\D^{+}_{\kappa}$ ($\kappa>0$) of the universal cover $\SLt(2,\R)$,
which generalize the holomorphic discrete series obtained by quantizing the coadjoint
orbits of Section~\ref{sec:coadjoint} to include the limit of holomorphic discrete
series ($\kappa=1$) and the analytic-continuation family ($0<\kappa<1$).
We recall four unitarily equivalent concrete models --- the half-line
(Kirillov) and Whittaker models, and the Bergman models on $\HH$ and $\DD$
--- which will reappear in the spectral analysis of later sections.
Standard references for the material recalled here are
\cite{HoweTan1992} (which the reader new to $\SL(2,\R)$ representation
theory may consult throughout) and~\cite{Sally1967} for the analytic
continuation in~$\kappa$.

\subsection{The universal cover}

Since $K=\SO(2)\cong S^{1}$ is a deformation retract of $G=\SL(2,\R)$,
$\pi_{1}(G)=\pi_{1}(K)=\Z$. The universal cover
\[
  \Gt\;=\;\SLt(2,\R)\quad\text{fits in the short exact sequence}\quad
  1\to\Z\to\Gt\to\SL(2,\R)\to 1.
\]
The Iwasawa decomposition lifts to $\Gt=\Kt AN$, with the non-compact
factors $A,N$ unchanged and $\Kt\cong\R$ the universal cover of $K=\SO(2)$.
The characters of $\Kt$ are $\chi_{\kappa}(\theta)=e^{\ii\kappa\theta}$ for
\emph{arbitrary} $\kappa\in\R$ --- not just integers. This is the source of
the continuous family of representations on $\Gt$ that have no analog on
$\SL(2,\R)$.

\subsection{The unitary lowest-weight representations \texorpdfstring{$\D^{+}_{\kappa}$}{D+\_kappa}}
\label{ssec:disc-series}

For every real $\kappa>0$, the unitary lowest-weight representation
$\D^{+}_{\kappa}$ of $\Gt$\footnote{We follow the representation-theoretic convention in which the subscript~$\kappa$ equals the lowest $\Kt$-type.} (see~\cite[Ch.~II]{HoweTan1992}
for a concrete construction and~\cite{Sally1967} for the analytic
continuation in~$\kappa$) has $\Kt$-types
\[
  \{\kappa,\;\kappa+2,\;\kappa+4,\;\ldots\},
\]
each with multiplicity one, and a lowest-weight vector annihilated by the
lowering operator (see~\eqref{eq:epm} below).
\begin{itemize}[leftmargin=2em,topsep=2pt,itemsep=2pt]
  \item For $\kappa\in\Z_{\ge 2}$, $\D^{+}_{\kappa}$ descends to the holomorphic
        discrete series $D^{+}_{\kappa}$ of $\SL(2,\R)$;
  \item for $\kappa=1$ it is the limit of holomorphic discrete series $D^{+}_{1}$;
  \item for $\kappa\in\tfrac12+\Z_{\ge 0}$,
        $\D^{+}_{\kappa}$ descends to the metaplectic group $\Mp(2,\R)$ but not to $\SL(2,\R)$
        (the Weil representation is
        $\D^{+}_{1/2}\oplus\D^{+}_{3/2}$; see~\cite[Ch.~IV]{HoweTan1992});
  \item for generic real $\kappa>0$ it lives only on $\Gt$.
\end{itemize}
The Casimir element of $\sltwo$ is defined only up to a nonzero
multiplicative scalar; we fix the normalization so that it acts on
$\D^{+}_{\kappa}$ by $\Omega:=\kappa(\kappa-2)/4$
(positive for $\kappa>2$, zero at $\kappa=2$, negative for $0<\kappa<2$). The representation is
square-integrable for $\kappa>1$, tempered but non-square-integrable for $\kappa=1$,
and unitary but non-tempered for
$0<\kappa<1$ (obtained by analytic continuation in~$\kappa$~\cite{Sally1967}).

\subsection{Four concrete models}\label{ssec:four-models}

We use the standard basis of $\sltwo$,
\[
  H=\begin{pmatrix}1&0\\0&-1\end{pmatrix},\qquad
  E=\begin{pmatrix}0&1\\0&0\end{pmatrix},\footnote{The symbol $E$ is also used
  later in the paper to denote the energy (i.e.\ the eigenvalue of the
  Hamiltonian $H_{\kappa}$), for instance in Section~\ref{ssec:spectral-decomp}
  where $E=k^{2}/2$; the intended meaning will be clear from context.}\qquad
  F=\begin{pmatrix}0&0\\1&0\end{pmatrix},
\]
with $[H,E]=2E$, $[H,F]=-2F$, $[E,F]=H$ (see the introduction's
footnote on the overloaded symbol $H$). In terms of the basis
$\{v_{0},v_{1},v_{2}\}$ of Section~\ref{sec:coadjoint}, one has $H=2v_{1}$, $E=v_{2}+v_{0}$,
$F=v_{2}-v_{0}$. We also write $h:=\ii(F-E)=-2\ii v_{0}$ for the elliptic Cartan
generator and
\begin{equation}\label{eq:epm}
  e_{\pm}=\tfrac12(H\pm \ii(E+F)),\qquad [h,e_{\pm}]=\pm 2e_{\pm},
\end{equation}
so $e_{+}$ raises and $e_{-}$ lowers the $\Kt$-type.
All four of the following models will be used below; standard
references for the equivalences and intertwiners between them are
\cite{HoweTan1992,Sally1967}. In what follows, $\Hil_{\kappa}$ (no superscript)
denotes the abstract Hilbert space of $\D^{+}_{\kappa}$;
$\Hil^{\mathrm{Kir}}_{\kappa}$, $\Hil^{\mathrm{Wh}}_{\kappa}$, $\Hil^{\HH}_{\kappa}$,
$\Hil^{\DD}_{\kappa}$ are its concrete realizations in the four models below.
The two half-line models~(i)--(ii) are valid for all $\kappa>0$; the two
Bergman models~(iii)--(iv) are concrete $L^{2}$-spaces only for $\kappa>1$;
for $\kappa\le 1$ they are defined by reproducing-kernel completion
(see the remark after model~(iv) below).

\medskip\noindent\textbf{(i) Half-line (Kirillov) model.}
Hilbert space $\Hil^{\mathrm{Kir}}_{\kappa}=L^{2}(\R_{>0},\,t^{\kappa-1}\,\mathrm{d}t)$. The Lie
algebra acts by
\[
  \pi_{\kappa}(H)=2t\partial_{t}+\kappa,\qquad
  \pi_{\kappa}(E)=\ii t,\qquad
  \pi_{\kappa}(F)=\ii\bigl(t\partial_{t}^{2}+\kappa\partial_{t}\bigr).
\]
The (normalized) lowest-weight vector is $\phi_{0}(t)=\sqrt{2^{\kappa}/\Gamma(\kappa)}\,e^{-t}$ (annihilated by $e_{-}$),
and applying the raising operator $e_{+}$ produces the orthonormal basis
\[
  \phi_{n}(t)=\sqrt{\frac{n!\,2^{\kappa}}{\Gamma(n+\kappa)}}\;L_{n}^{(\kappa-1)}(2t)\,e^{-t},\qquad n=0,1,2,\ldots,
\]
of $\Kt$-type $\kappa+2n$. Here $L_{n}^{(\alpha)}$ denotes the generalized
(associated) Laguerre polynomial of degree $n$ and parameter $\alpha$, i.e.\ the
orthogonal polynomials on $(0,\infty)$ with weight $x^{\alpha}e^{-x}$
(see~\cite[Ch.~18]{DLMF}); they appear here because successive applications of
the raising operator $e_{+}$ to the Gaussian-type lowest-weight vector
$\phi_{0}\propto e^{-t}$ generate precisely $L_{n}^{(\kappa-1)}(2t)\,e^{-t}$.

\medskip\noindent\textbf{(ii) Half-line (Whittaker) model.}
Hilbert space $\Hil^{\mathrm{Wh}}_{\kappa}=L^{2}(\R_{>0},\,t^{-1}\,\mathrm{d}t)$, obtained
from (i) by the unitary conjugation $f\mapsto t^{\kappa/2}f$. Writing
$\hat\pi_{\kappa}(X):=t^{\kappa/2}\,\pi_{\kappa}(X)\,t^{-\kappa/2}$ for the
conjugated representation, the Lie algebra acts by
\[
  \hat\pi_{\kappa}(H)=2t\partial_{t},\qquad
  \hat\pi_{\kappa}(E)=\ii t,\qquad
  \hat\pi_{\kappa}(F)=\ii\!\left(t\partial_{t}^{2}-\frac{\kappa(\kappa-2)}{4t}\right).
\]
The elliptic Cartan
generator $h=\ii(F-E)$ becomes a Schr\"odinger-type operator
\begin{equation}\label{eq:hatpih-Schrodinger}
  \hat\pi_{\kappa}(h)\;=\;-t\partial_{t}^{2}\;+\;\frac{\kappa(\kappa-2)}{4t}\;+\;t.
\end{equation}
Its complete orthonormal eigenbasis $\{\hat\phi_{n}\}_{n=0}^{\infty}$, with $\hat\phi_{n}:=t^{\kappa/2}\phi_{n}$ the image of the Kirillov basis under conjugation by $t^{\kappa/2}$, satisfies
\begin{equation}\label{eq:hatphin-eigen}
  \hat\pi_{\kappa}(h)\,\hat\phi_{n}\;=\;(\kappa+2n)\,\hat\phi_{n},\qquad n=0,1,2,\ldots.
\end{equation}

\medskip\noindent\textbf{(iii) Bergman (upper-half-plane) model.}
Hilbert space $\Hil^{\HH}_{\kappa}$ of holomorphic functions on $\HH=\{z=x+\ii y:y>0\}$
with $\|f\|^{2}_{\kappa}=\int_{\HH}|f|^{2}\,y^{\kappa-2}\,\mathrm{d}x\,\mathrm{d}y<\infty$, with
the cocycle action
\[
  (\pi_{\kappa}(g)f)(z)=(cz+d)^{-\kappa}f\!\left(\frac{az+b}{cz+d}\right),\qquad
  g^{-1}=\begin{pmatrix}a&b\\c&d\end{pmatrix}.
\]
For non-integer $\kappa$, single-valuedness of $(cz+d)^{-\kappa}$ requires lifting
to $\Gt$.

\medskip\noindent\textbf{(iv) Bergman (Poincar\'e disk) model.}
Hilbert space $\Hil^{\DD}_{\kappa}$ of holomorphic functions on the unit disk
$\DD=\{w\in\C:|w|<1\}$ with
$\|f\|^{2}_{\DD,\kappa}=\int_{\DD}|f(w)|^{2}(1-|w|^{2})^{\kappa-2}\,\mathrm{d}^{2}w<\infty$.
The Cayley transform $w=(z-\ii)/(z+\ii)$ identifies $\HH$ with $\DD$ and intertwines
(iii) with (iv); under this identification the M\"obius action becomes
\[
  (\pi_{\kappa}(g)f)(w)=(\bar\beta w+\bar\alpha)^{-\kappa}\,
  f\!\left(\frac{\alpha w+\beta}{\bar\beta w+\bar\alpha}\right),\qquad
  g^{-1}=\begin{pmatrix}\alpha&\beta\\\bar\beta&\bar\alpha\end{pmatrix}\in\SU(1,1),
\]
with $|\alpha|^{2}-|\beta|^{2}=1$. In this model the elliptic Cartan generator $h$
is diagonalized by the monomial basis $\{w^{n}\}_{n\ge 0}$, with $w^{n}$ of
$\Kt$-type $\kappa+2n$; the lowest-weight vector is the constant function $1$.

Strictly, the weights $(1-|w|^{2})^{\kappa-2}$ in (iv) and $y^{\kappa-2}$ in (iii)
are locally integrable only for $\kappa>1$; for $0<\kappa\le 1$ both Bergman models
are understood as reproducing-kernel completions, obtained by analytic
continuation in~$\kappa$~\cite{Sally1967}.

\section{The 1D generalized Coulomb problems}\label{sec:Hkappa}

Throughout this and the remaining sections we use the
representation-theoretic parameter $\kappa>0$. For $\kappa\ge 1$ it is
related to the classical magnetic charge $\mu\ge 0$ by
$\kappa=2\mu+1$; the range $0<\kappa<1$ has no classical counterpart
but is perfectly meaningful on the quantum side, since $\D^{+}_{\kappa}$
and $H_{\kappa}$ are well-defined for every $\kappa>0$.
In terms of $\kappa$, the Casimir element (in the normalization of Section~\ref{ssec:disc-series}) acts on $\D^{+}_{\kappa}$
by the scalar $\Omega=\kappa(\kappa-2)/4$,
and the classical squaring $\OO_{\mu}\to\OO_{\mu/2}$ (for $\mu\ge 0$)
corresponds to the representation-parameter shift
$\kappa\mapsto(\kappa+1)/2$ on the quantum side.

Since quantization is not a functor, the classical picture of
Sections~\ref{sec:coadjoint} and~\ref{sec:classical-reg} serves as a guide rather than a recipe.  Following
\cite{Meng2014}, the 1D generalized Coulomb problems are introduced
directly from the unitary lowest-weight representations, via the TKK algebra.
The \emph{universal Kepler problem} of \cite{Meng2014} is an abstract
object attached to a Euclidean Jordan algebra $V$, encoded by its
TKK algebra. A \emph{suitable} Poisson realization
of this TKK algebra produces a (classical) generalized Kepler problem;
likewise, a \emph{suitable} operator realization of this TKK algebra
produces a generalized quantum Kepler problem --- i.e.\ a generalized
Coulomb problem. For the simplest Jordan algebra $V=\R$ the conformal algebra is
$\sltwo$, and the suitable operator realizations of the
TKK algebra are precisely the unitary lowest-weight
representations $\D^{+}_{\kappa}$. Each $\D^{+}_{\kappa}$ yields a
Hamiltonian in the following $\kappa>0$ family:
\begin{equation}\label{eq:Hkappa}
  \boxed{\;
    H_{\kappa}\;=\;-\,\frac12\,\frac{\mathrm{d}^{2}}{\mathrm{d}q^{2}}\;+\;\frac{\kappa(\kappa-2)}{8q^{2}}\;-\;\frac{1}{q}.
  \;}
\end{equation}
The inverse-square coefficient $\kappa(\kappa-2)/8$ equals $\Omega/2$; it is
repulsive for $\kappa>2$, vanishes at $\kappa=0,2$, and is
attractive for $0<\kappa<2$.

Viewed as a symmetric operator in $L^{2}(\R_{>0},\mathrm{d}q)$ with domain
$C_{c}^{\infty}(\R_{>0})$,
\eqref{eq:Hkappa} is essentially self-adjoint only for $\kappa\ge 3$.
For $0<\kappa<3$ the two Frobenius solutions of $H_{\kappa}\psi=E\psi$ at $q=0$, behaving as
$q^{\kappa/2}$ and $q^{1-\kappa/2}$, are both square-integrable near the
origin, so a choice of boundary condition is needed. For
$1\le\kappa<3$, the canonical choice is the \emph{Friedrichs extension}
(equivalently, requiring $\psi(q)=O(q^{\kappa/2})$ as $q\to 0^{+}$);
for $0<\kappa<1$, it is the \emph{Krein (anti-Friedrichs) extension}
(equivalently, requiring $\psi(q)=O(q^{1-\kappa/2})$ as $q\to 0^{+}$).
In both cases, the chosen extension is the unique one whose bound-state
subspace carries $\D^{+}_{\kappa}$ representation-theoretically. \emph{Henceforth $H_{\kappa}$ denotes this
canonical self-adjoint operator.}

The assignment $\D^{+}_{\kappa}\longleftrightarrow H_{\kappa}$ is a bijection
\begin{equation*}
  \boxed{\;\begin{gathered}
    \bigl\{\text{unitary lowest-weight representations of }\SLt(2,\R)\bigr\}\\
    \updownarrow\;{1{:}1}\\
    \bigl\{\text{1D generalized Coulomb problems}\bigr\}.
  \end{gathered}\;}
\end{equation*}

\subsection{Spectral decomposition}\label{ssec:spectral-decomp}

$H_{\kappa}$ has the standard Coulomb-type spectral structure:
\[
  \sigma(H_{\kappa})\;=\;\{E_{n}\}_{n=0}^{\infty}\;\cup\;[0,\infty),\qquad
  E_{n}=-\frac{1/2}{(n+\kappa/2)^{2}}\nearrow 0.
\]
Correspondingly,
\[
  L^{2}(\R_{>0},\mathrm{d}q)\;=\;\Sigmah_{-}\,\oplus\,\Sigmah_{+},\footnote{The threshold energy $E=0$ lies in the spectrum but carries zero spectral measure, so it does not contribute to either subspace.}
\]
where $\Sigmah_{-}$ (resp.\ $\Sigmah_{+}$) is the closed subspace on which
$H_{\kappa}$ is negative-definite (resp.\ positive-definite). These are the natural
quantum counterparts of the classical $\Sigma_{\pm}=\{\pm H>0\}$.
We now derive the explicit (generalized) eigenfunctions of $H_{\kappa}$ that span each subspace.

\subsubsection*{The Kummer equation}

The \textbf{Kummer equation} (or confluent hypergeometric equation) is
\begin{equation}\label{eq:Kummer-ODE}
  z\,w''(z)\;+\;(b-z)\,w'(z)\;-\;a\,w(z)\;=\;0,
\end{equation}
with parameters $a,b\in\C$.  The solution regular at the origin is
\textbf{Kummer's function of the first kind},
\begin{equation}\label{eq:1F1-def}
  {}_{1}F_{1}(a;\,b;\,z)\;=\;\sum_{n=0}^{\infty}\frac{(a)_{n}}{(b)_{n}}\,\frac{z^{n}}{n!},
\end{equation}
where $(a)_{n}=\Gamma(a+n)/\Gamma(a)$ is the Pochhammer symbol
(see~\cite[Ch.~13]{DLMF} and~\cite{WhittakerWatson1927}).

\subsubsection*{Scattering eigenfunctions (continuous spectrum)}

The generalized eigenvalue equation $H_{\kappa}\psi=\frac{k^{2}}{2}\psi$ at positive
energy $E=k^{2}/2>0$ reads
\begin{equation}\label{eq:Coulomb-ODE}
  \psi''(q)\;+\;\left(k^{2}\;+\;\frac{2}{q}\;-\;\frac{\kappa(\kappa-2)/4}{q^{2}}\right)\psi(q)\;=\;0.
\end{equation}
Substituting $\psi(q)=q^{\kappa/2}\,e^{-\ii kq}\,f(2\ii kq)$ and writing
$z=2\ii kq$, a direct calculation reduces~\eqref{eq:Coulomb-ODE} to the
Kummer equation~\eqref{eq:Kummer-ODE} with parameters
\[
  a\;=\;\frac{\kappa}{2}+\frac{\ii}{k},\qquad b\;=\;\kappa.
\]
The solution regular at the origin is $f(z)={}_{1}F_{1}(a;\,b;\,z)$.
The normalized scattering generalized eigenfunction of $H_{\kappa}$ is
\begin{equation}\label{eq:psi-E-Coulomb}
  \psi_{E}(q)\;=\;\mathcal{C}_{\kappa}(k)\,q^{\kappa/2}\,e^{-\ii kq}\;
    {}_{1}F_{1}\!\bigl(\tfrac{\kappa}{2}+\tfrac{\ii}{k};\,\kappa;\,2\ii kq\bigr),
\end{equation}
with the Coulomb normalization constant
\begin{equation}\label{eq:Coulomb-norm}
  \mathcal{C}_{\kappa}(k)\;=\;\frac{1}{\sqrt{2\pi}}\,(2k)^{\kappa/2}\,
    e^{\pi/(2k)}\,\frac{\bigl|\Gamma(\tfrac{\kappa}{2}+\tfrac{\ii}{k})\bigr|}{\Gamma(\kappa)}.
\end{equation}
These satisfy~\cite[Ch.~33]{DLMF}
$\int_{0}^{\infty}\overline{\psi_{E(k)}(q)}\,\psi_{E(k')}(q)\,\mathrm{d}q
=\delta(k-k')$, and the \textbf{scattering subspace} $\Sigmah_{+}$ is
the closed span of $\{\psi_{E}\}_{E>0}$.

\subsubsection*{Bound-state eigenfunctions (discrete spectrum)}

The eigenvalue equation $H_{\kappa}\psi=-\frac{1}{2N_{n}^{2}}\psi$ at negative
energy $E_{n}=-1/(2N_{n}^{2})$, with $N_{n}:=n+\kappa/2$, is obtained from
the same ODE~\eqref{eq:Coulomb-ODE} by the substitution $k\mapsto -\ii/N_{n}$.
Applying the same substitution to the scattering ansatz gives
$\psi(q)=q^{\kappa/2}\,e^{-q/N_{n}}\,f(2q/N_{n})$; writing
$z=2q/N_{n}$, one arrives at the Kummer equation~\eqref{eq:Kummer-ODE}
with
\[
  a\;=\;-n,\qquad b\;=\;\kappa.
\]
Since $a=-n$ is a non-positive integer, the Kummer function terminates and
yields a generalized \textbf{Laguerre polynomial}:
\begin{equation}\label{eq:Laguerre-1F1}
  L_{n}^{(\alpha)}(z)\;=\;\binom{n+\alpha}{n}\;
    {}_{1}F_{1}(-n;\;\alpha+1;\;z),
  \qquad n=0,1,2,\dotsc
\end{equation}
The normalized bound-state eigenfunction is therefore
\begin{equation}\label{eq:psi-n-Laguerre}
  \psi_{n}(q)\;=\;\mathcal{N}_{n}\,q^{\kappa/2}\,e^{-q/N_{n}}\,
    L_{n}^{(\kappa-1)}(2q/N_{n}),\qquad
  \mathcal{N}_{n}\;=\;\frac{1}{N_{n}}\Bigl(\frac{2}{N_{n}}\Bigr)^{\!\kappa/2}
    \sqrt{\frac{n!}{\Gamma(n+\kappa)}},
\end{equation}
where the normalization follows from the Laguerre orthogonality relation~\cite[Ch.~18]{DLMF}
\begin{equation}\label{eq:Laguerre-orth}
  \int_{0}^{\infty}z^{\alpha}\,e^{-z}\,L_{m}^{(\alpha)}(z)\,L_{n}^{(\alpha)}(z)\,\mathrm{d}z
  \;=\;\frac{\Gamma(n+\alpha+1)}{n!}\,\delta_{mn},
  \qquad\alpha>-1.
\end{equation}
The \textbf{bound subspace} $\Sigmah_{-}$ is spanned by the orthonormal
set $\{\psi_{n}\}_{n=0}^{\infty}$.

\section{Quantum symmetry regularization}\label{sec:quantum-reg}

The classical regularization reviewed in Section~\ref{sec:classical-reg} suggests its quantum analog:
for each sign $\pm$, a unitary isomorphism
$\hat\iota_{\pm}\colon\Sigmah_{\pm}\xrightarrow{\;\sim\;}\OOh_{\pm}$
under which the positive operator $|2H_{\kappa}|^{-1/2}$ corresponds to
the action of an element in the complexified Lie algebra $\mathfrak{sl}(2,\C)$ on the target $\OOh_{\pm}$.
Since quantization is not a functor, there is a priori no guarantee that such quantum analogues exist, nor is it clear what the relevant Lie algebra elements should be.
This question---which we call \emph{quantum (symmetry) regularization}---can be posed independently of the classical picture.
Nevertheless, the classical picture sharpens the question: it suggests that the targets should be
$\OOh_{-}=\D^{+}_{\kappa}$ and $\OOh_{+}=\D^{+}_{(\kappa+1)/2}$, an identification we adopt throughout this paper.

With the facts reviewed in Section~\ref{ssec:four-models} and Section~\ref{ssec:spectral-decomp}, this question admits an explicit answer:

\begin{theorem}[Quantum regularization]\label{thm:iotahat}
Realize both targets $\OOh_{\pm}$ on the common Whittaker half-line Hilbert space
$L^{2}(\R_{>0},t^{-1}\,\mathrm{d}t)$, equipped respectively with the representations
$\hat\pi_{\kappa}$ for $\OOh_{-}=\D^{+}_{\kappa}$ and $\hat\pi_{(\kappa+1)/2}$
for $\OOh_{+}=\D^{+}_{(\kappa+1)/2}$. Let $E,F$ denote the standard
$\sltwo$-basis elements of Section~\ref{ssec:four-models}.

\smallskip\noindent\textbf{(i) Negative-energy case.}
Set $h=\ii(F-E)$. The map $\psi_{n}\mapsto\hat\phi_{n}$, where
$\{\psi_{n}\}$ is the bound-state eigenbasis of $H_{\kappa}$ on $\Sigmah_{-}$
and $\{\hat\phi_{n}\}$ is the orthonormal $\Kt$-eigenbasis
of~\eqref{eq:hatphin-eigen}, extends by linearity and continuity to a unitary
isomorphism
\[
  \hat\iota_{-}\colon\Sigmah_{-}\xrightarrow{\;\sim\;}\OOh_{-}
\]
satisfying $\hat\iota_{-}^{-1}\,\hat h_{-}\,\hat\iota_{-}=H_{\kappa}$, where
$\hat h_{-}:=-1/\bigl(2(\hat\pi_{\kappa}(h/2))^{2}\bigr)$.

\smallskip\noindent\textbf{(ii) Positive-energy case.}
Let $\psi_{\lambda}:=\lambda^{-1}\,\psi_{E}\big|_{E=1/(2\lambda^{2})}$ denote the
scattering generalized eigenfunction~\eqref{eq:psi-E-Coulomb} reparametrized by
$\lambda=(2E)^{-1/2}>0$, so that
$\int_{0}^{\infty}\overline{\psi_{\lambda}(q)}\,\psi_{\lambda'}(q)\,\mathrm{d}q
=\delta(\lambda-\lambda')$. The map $\psi_{\lambda}\mapsto\sqrt{t}\,\delta(t-\lambda)$
extends to a unitary isomorphism
\[
  \hat\iota_{+}\colon\Sigmah_{+}\xrightarrow{\;\sim\;}\OOh_{+}
\]
satisfying $\hat\iota_{+}^{-1}\,\hat h_{+}\,\hat\iota_{+}=H_{\kappa}$, where
$\hat h_{+}:=1/\bigl(2(-\ii\hat\pi_{(\kappa+1)/2}(E))^{2}\bigr)$.
\end{theorem}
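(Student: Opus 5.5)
Both parts reduce to matching spectral data, so the plan is to compare orthonormal (generalized) eigenbases on the two sides.

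\textbf{Part (i).} By Section~\ref{ssec:spectral-decomp}, $\{\psi_n\}_{n\ge0}$ is an orthonormal basis of $\Sigmah_-$. Its orthonormality follows from self-adjointness of the canonical extension together with the normalization~\eqref{eq:Laguerre-orth} of $\mathcal N_n$. Completeness holds by definition of $\Sigmah_-$ as the closed span of the bound states. On the target side, $\{\hat\phi_n\}$ is an orthonormal basis of $L^2(\R_{>0},t^{-1}\mathrm dt)$: it is the image of the Kirillov Laguerre basis $\{\phi_n\}$ under the unitary $f\mapsto t^{\kappa/2}f$, and Laguerre polynomials are complete in $L^2(t^{\kappa-1}e^{-2t}\,\mathrm dt)$. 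Consequently $\psi_n\mapsto\hat\phi_n$, extended linearly, maps an orthonormal basis onto an orthonormal basis and is therefore unitary.

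For the intertwining relation, \eqref{eq:hatphin-eigen} gives $\hat\pi_\kappa(h/2)\hat\phi_n=(n+\kappa/2)\hat\phi_n=N_n\hat\phi_n$. Hence $\hat h_-\hat\phi_n=-\tfrac{1}{2N_n^2}\hat\phi_n=E_n\hat\phi_n$. Since $H_\kappa\psi_n=E_n\psi_n$, the two operators are diagonal in corresponding bases with identical eigenvalues. They are bounded on these subspaces, so $\hat\iota_-^{-1}\hat h_-\hat\iota_-=H_\kappa|_{\Sigmah_-}$.

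\textbf{Part (ii).} First record the normalization. From $\delta(k-k')$-orthonormality and $k=1/\lambda$, we get $|\mathrm dk/\mathrm d\lambda|=\lambda^{-2}$, so $\delta(k-k')=\lambda^2\delta(\lambda-\lambda')$. Thus $\psi_\lambda=\lambda^{-1}\psi_E$ is $\delta(\lambda-\lambda')$-normalized.

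The substantive input is the eigenfunction-expansion (Weyl--Titchmarsh--Kodaira) theorem for $H_\kappa$ on its absolutely continuous part. It says that the generalized Fourier transform $(\mathcal F\psi)(\lambda):=\int_0^\infty\overline{\psi_\lambda(q)}\psi(q)\,\mathrm dq$ is a unitary map $\Sigmah_+\to L^2(\R_{>0},\mathrm d\lambda)$. It also gives that the spectrum on $\Sigmah_+$ is purely absolutely continuous with multiplicity one, and that $\mathcal F H_\kappa\mathcal F^{-1}$ is multiplication by $1/(2\lambda^2)$.

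I expect this to be the main obstacle. I would justify it by citing the Coulomb-function completeness relations of~\cite[Ch.~33]{DLMF}, or the Titchmarsh theory for the limit-circle/limit-point endpoints. One must check that these hold with the boundary condition chosen in Section~\ref{sec:Hkappa}. This includes the Krein choice for $0<\kappa<1$, where the regular solution $q^{\kappa/2}$ in~\eqref{eq:psi-E-Coulomb} should be replaced by the one with $q^{1-\kappa/2}$ behaviour, or else one must argue that the stated $\psi_E$ is still the correct one. The cleanest route is to verify that the asymptotic phase normalization yields the stated constant~\eqref{eq:Coulomb-norm}. One must also verify that no singular continuous spectrum occurs, since the potential is $O(1/q)$ at infinity.

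Granting this, compose $\mathcal F$ with the unitary $U\colon L^2(\mathrm d\lambda)\to L^2(t^{-1}\mathrm dt)$, $(Ug)(t)=\sqrt t\,g(t)$. Then $\hat\iota_+:=U\mathcal F$ is unitary, and it sends $\psi_\lambda$ formally to $\sqrt t\,\delta(t-\lambda)$, as stated. Finally, $-\ii\hat\pi_{(\kappa+1)/2}(E)$ is multiplication by $t$ in the Whittaker model, independently of the parameter. Hence $\hat h_+$ is multiplication by $1/(2t^2)$, which is exactly $U(\text{mult. by }1/(2\lambda^2))U^{-1}$, and so $\hat\iota_+^{-1}\hat h_+\hat\iota_+=H_\kappa|_{\Sigmah_+}$. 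The domains match because both sides are unitarily equivalent multiplication operators.
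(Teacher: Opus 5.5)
Your proposal is correct and is essentially the paper's argument: both proofs match complete (generalized) eigenfamilies with equal eigenvalues. Your explicit $\hat\iota_{+}=U\mathcal{F}$, with the eigenfunction-expansion theorem isolated as the analytic input, is a more careful rendering of the paper's ``same spectral type'' principle.

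On the point you leave open for $0<\kappa<1$, take your second alternative and keep the $q^{\kappa/2}$ eigenfunctions, i.e.\ the boundary condition that kills the $q^{1-\kappa/2}$ component. Switching to the $q^{1-\kappa/2}$ solution would give bound energies $-1/(2(n+1-\kappa/2)^{2})$, which is the spectrum attached to $\D^{+}_{2-\kappa}$, and part (i) would then fail. The inconsistency lies in the paper's phrase ``$\psi(q)=O(q^{1-\kappa/2})$'' for that range, not in its eigenfunctions.
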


\begin{proof}
The argument in both cases rests on the same principle: two self-adjoint
operators on separable Hilbert spaces that have the same spectral type
(i.e.\ the same spectrum with the same multiplicity) are unitarily
equivalent, and matching (generalized) eigenfunctions with equal
eigenvalues defines the intertwining unitary isomorphism.

\smallskip\noindent\textbf{(i)}
On $\Sigmah_{-}$, the operator
$\hat N_{-}=(-2H_{\kappa})^{-1/2}$ is positive and self-adjoint with
purely discrete, simple spectrum
$\{n+\kappa/2\}_{n=0}^{\infty}$ and orthonormal eigenbasis
$\{\psi_{n}\}$.
On $\OOh_{-}=\D^{+}_{\kappa}$ (Whittaker model), the elliptic Cartan
generator $\hat\pi_{\kappa}(h/2)$ is self-adjoint with the same purely
discrete, simple spectrum $\{n+\kappa/2\}_{n=0}^{\infty}$ and
orthonormal eigenbasis $\{\hat\phi_{n}\}$
(cf.~\eqref{eq:hatphin-eigen}).
Since both eigenbases are complete orthonormal systems indexed by the
same eigenvalues, the map $\psi_{n}\mapsto\hat\phi_{n}$ extends by
linearity and continuity to a unitary isomorphism
$\hat\iota_{-}\colon\Sigmah_{-}\xrightarrow{\sim}\OOh_{-}$
satisfying $\hat\iota_{-}^{-1}\,\hat\pi_{\kappa}(h/2)\,\hat\iota_{-}
=\hat N_{-}$.
Squaring and taking reciprocals gives
$\hat\iota_{-}^{-1}\,\hat h_{-}\,\hat\iota_{-}
=-1/(2\hat N_{-}^{2})=H_{\kappa}$, where
$\hat h_{-}=-1/\bigl(2(\hat\pi_{\kappa}(h/2))^{2}\bigr)$.

\smallskip\noindent\textbf{(ii)}
On $\Sigmah_{+}$, the operator
$\hat N_{+}=(2H_{\kappa})^{-1/2}$ is positive and self-adjoint with
purely continuous, simple spectrum $(0,\infty)$ and Dirac-normalized
generalized eigenfunctions $\{\psi_{\lambda}\}_{\lambda>0}$.
On $\OOh_{+}=\D^{+}_{(\kappa+1)/2}$ (Whittaker model), the self-adjoint
companion $-\ii\hat\pi_{(\kappa+1)/2}(E)$ acts as
multiplication by $t>0$, hence also has purely continuous, simple
spectrum $(0,\infty)$ with Dirac-normalized generalized eigenfunctions
$\{\sqrt{t}\,\delta(t-\lambda)\}_{\lambda>0}$.
Since both operators have the same spectral type, matching generalized
eigenfunctions at each $\lambda$ defines the unitary isomorphism
$\hat\iota_{+}\colon\Sigmah_{+}\xrightarrow{\sim}\OOh_{+}$
satisfying $\hat\iota_{+}^{-1}\,(-\ii\hat\pi_{(\kappa+1)/2}(E))\,\hat\iota_{+}
=\hat N_{+}$.
Squaring and taking reciprocals gives
$\hat\iota_{+}^{-1}\,\hat h_{+}\,\hat\iota_{+}
=1/(2\hat N_{+}^{2})=H_{\kappa}$, where
$\hat h_{+}=1/\bigl(2(-\ii\hat\pi_{(\kappa+1)/2}(E))^{2}\bigr)$.
\end{proof}

This theorem is the quantum counterpart of Theorems~1 and~2
of~\cite{MMX2025}. The classical statements that $\iota_{\pm}$ are
symplectic embeddings with \emph{open dense} image are sharpened, on
quantization, to the statement that $\hat\iota_{\pm}$ are unitary
isomorphisms with image equal to \emph{all} of $\OOh_{\pm}$.
Equivalently, the one-parameter unitary group
$e^{-\ii\tau H_{\kappa}}$ ($\tau$ denoting time) on $\Sigmah_{\pm}$
is conjugate, via $\hat\iota_{\pm}$, to
$e^{-\ii\tau\hat h_{\pm}}$ on $\OOh_{\pm}$.

\section{Concluding remarks}\label{sec:concluding}

On the classical side, Ma--Meng--Xiao~\cite{MMX2025} realize each generalized
Kepler phase-space half $\Sigma_{\pm}$ as a dense open part of a
coadjoint orbit $\OO_{\pm}$ of $\sltwo$, and rename ``regularization'' as
\emph{S-duality}: an explicit change of variables that turns a hidden
symmetry into a manifest one, with the same flavor as the Fourier transform
between conjugate variables.

Quantizing this picture produces a strict quantum analog
$\hat\iota_{\pm}\colon\Sigmah_{\pm}\to\OOh_{\pm}$
(Theorem~\ref{thm:iotahat}): $\hat\iota_{-}$ identifies the
bound-state subspace with $\D^{+}_{\kappa}$, while $\hat\iota_{+}$
identifies the scattering subspace with $\D^{+}_{(\kappa+1)/2}$. Their pullback identities are the precise quantum
counterparts of the classical pullbacks $\iota_{\pm}^{*}h_{\pm}=H$.

Two structural lessons stand out.
\begin{description}[leftmargin=1em,style=nextline]
  \item[(a) Quantum S-duality parallels the classical decomposition.]
    Just as classically $\Sigma$ splits into $\Sigma_{-}\sqcup\Sigma_{+}$,
    the quantum Hilbert space splits into $\Sigmah_{-}\oplus\Sigmah_{+}$,
    and the two pieces are mapped to \emph{different} unitary
    representations:
    $\OOh_{-}=\D^{+}_{\kappa}$ is the unitary lowest-weight representation onto
    which $\Sigmah_{-}$ (where $H_{\kappa}$ has discrete spectrum) is mapped;
    $\OOh_{+}=\D^{+}_{(\kappa+1)/2}$ is the unitary lowest-weight representation
    onto which $\Sigmah_{+}$ (where $H_{\kappa}$ has continuous spectrum) is mapped.
    The distinction between scattering and bound is purely
    representation-theoretic: it lies in the choice of Lie algebra
    element---a nilpotent generator ($-\ii E$) for the positive-energy case
    and an elliptic generator ($h/2$) for the negative-energy case.
  \item[(b) Quantum construction is easier than classical.]
    The bulk of the work lies in formulating the problem in the right
    framework and assembling the requisite background from representation
    theory, special functions, and spectral theory.
    Once this is in place, the actual construction of
    $\hat\iota_{\pm}$ becomes transparent---it reduces to matching
    complete families of (generalized) eigenfunctions, with no
    hard techniques involved.
    The classical construction of $\iota_{\pm}$, by contrast, is substantially harder.
\end{description}

The present analysis fits into a lineage that begins with
Fock~\cite{Fock1935}, who, at each fixed negative energy level of the
3D hydrogen atom, used momentum-space stereographic projection to
$S^{3}$ to exhibit concretely the hidden $\SO(4)$ symmetry of that energy
eigenspace. Bander and
Itzykson~\cite{BanderItzykson1966I} subsequently
unified these per-level $\SO(n+1)$ actions (for the Coulomb problem in dimension $n\ge 2$) into a single irreducible
representation $D_{-}$ of the larger group $\SO_0(n+1,1)$, organizing all
bound levels into one Hilbert space. Indeed, they construct a unitary isomorphism
$\hat\iota_{-}\colon\Sigmah_{-}\to L^{2}(S^{n})$, where $L^{2}(S^{n})$ realizes the
representation $D_{-}$. Incidentally, the representation $\D^{+}_{\kappa}$ can be realized as a closed subspace of $L^{2}(S^{1})$ only when $\kappa=1$, in which case it is the Hardy space (the non-negative Fourier modes)~\cite{Knapp1986}. This strongly suggests that the 1D analogue of the hydrogen atom is the system with Hamiltonian $H_{1}$ (i.e.\ $\kappa=1$, equivalently $\mu=0$).

A natural next step is to repeat this analysis for the quantum
MICZ-Kepler problems and their higher-dimensional
analogs~\cite{Meng2007}. In~\cite{MengZhang2011}, Meng and Zhang have already constructed the analogue of $\hat\iota_{-}$, for which the target is realized as the space of $L^{2}$ sections of an appropriate vector bundle over the punctured Euclidean space of dimension $n$; it remains to work out the story for the scattering side, for which the group $\Spin_0(n,2)$ plays the role of $\Spin_0(n+1,1)$, following the guidance of the classical regularization results in~\cite{Meng2023}.


\bigskip
\noindent\textbf{Data availability} Data sharing is not applicable to this article, as no datasets were generated or analysed during the present study.

\section*{Declarations}
\noindent\textbf{Conflict of interest} The authors declare that they have no conflict of interest.



\begin{thebibliography}{99}

\bibitem{BanderItzykson1966I}
M.~Bander and C.~Itzykson,
\emph{Group theory and the hydrogen atom (I)},
Rev.\ Mod.\ Phys.\ \textbf{38} (1966) 330--345.

\bibitem{Fock1935}
V.~Fock,
\emph{Zur Theorie des Wasserstoffatoms},
Z.\ Phys.\ \textbf{98} (1935) 145--154.

\bibitem{HoweTan1992}
R.~Howe and E.-C.~Tan,
\emph{Non-Abelian Harmonic Analysis: Applications of $SL(2,\R)$},
Universitext, Springer, 1992.

\bibitem{Kirillov2004}
A.~A.~Kirillov,
\emph{Lectures on the Orbit Method},
Graduate Studies in Mathematics \textbf{64}, Amer.\ Math.\ Soc., Providence, RI, 2004.

\bibitem{Knapp1986}
A.~W.~Knapp,
\emph{Representation Theory of Semisimple Groups: An Overview Based on Examples},
Princeton Mathematical Series \textbf{36}, Princeton University Press, Princeton, NJ, 1986.

\bibitem{KustaanheimoStiefel1965}
P.~Kustaanheimo and E.~Stiefel,
\emph{Perturbation theory of Kepler motion based on spinor regularization},
J.~Reine Angew.\ Math.\ \textbf{218} (1965) 204--219.

\bibitem{LigonSchaaf1976}
T.~Ligon and M.~Schaaf,
\emph{On the global symmetry of the classical Kepler problem},
Rep.\ Math.\ Phys.\ \textbf{9} (1976) 281--300.

\bibitem{MMX2025}
J.~Ma, G.~Meng, J.~Xiao,
\emph{The symmetry regularization of 1D generalized Kepler problems},
J.~Geom.\ Phys.\ \textbf{216} (2025) 105576.

\bibitem{Meng2007}
G.~Meng,
\emph{MICZ-Kepler problems in all dimensions},
J.~Math.\ Phys.\ \textbf{48} (2007) 032105.

\bibitem{Meng2014}
G.~Meng,
\emph{The universal Kepler problem},
J.~Geom.\ Symmetry Phys.\ \textbf{36} (2014) 47--57.

\bibitem{Meng2023}
G.~Meng,
\emph{Ligon--Schaaf regularization map revisited},
J.~Math.\ Phys.\ \textbf{64} (2023) 122902.

\bibitem{Meng2024Oziewicz}
G.~Meng,
\emph{Kepler problem, Lorentz transformation, and Jordan algebra},
Chap.~15 (pp.~309--322) in
H.~M.~Colin Garcia, J.~de~J.~Cruz Guzm\'an, L.~H.~Kauffman, H.~Makaruk (eds.),
\emph{Scientific Legacy of Professor Zbigniew Oziewicz: Selected Papers from the
International Conference ``Applied Category Theory Graph-Operad-Logic''},
Series on Knots and Everything, World Scientific, 2024.

\bibitem{MengZhang2011}
G.~Meng and R.~Zhang,
\emph{Generalized MICZ-Kepler problems and unitary highest weight modules},
J.~Math.\ Phys.\ \textbf{52} (2011) 042106.

\bibitem{Moser1970}
J.~Moser,
\emph{Regularization of Kepler's problem and the averaging method on a manifold},
Commun.\ Pure Appl.\ Math.\ \textbf{23} (1970) 609--636.

\bibitem{DLMF}
F.~W.~J.~Olver, A.~B.~Olde Daalhuis, D.~W.~Lozier, B.~I.~Schneider,
R.~F.~Boisvert, C.~W.~Clark, B.~R.~Miller, B.~V.~Saunders, H.~S.~Cohl,
and M.~A.~McClain, eds.,
\emph{NIST Digital Library of Mathematical Functions},
\url{https://dlmf.nist.gov/}, Release 1.2.6 of 2026-03-15.

\bibitem{Sally1967}
P.~J.~Sally, Jr.,
\emph{Analytic continuation of the irreducible unitary representations of the
universal covering group of $SL(2,\R)$},
Mem.\ Amer.\ Math.\ Soc.\ \textbf{69} (1967).

\bibitem{Sniatycki1980}
J.~\'{S}niatycki,
\emph{Geometric Quantization and Quantum Mechanics},
Applied Mathematical Sciences \textbf{30}, Springer, New York, 1980.

\bibitem{WhittakerWatson1927}
E.~T.~Whittaker and G.~N.~Watson,
\emph{A Course of Modern Analysis}, 4th ed.,
Cambridge University Press, 1927.

\bibitem{Woodhouse1991}
N.~M.~J.~Woodhouse,
\emph{Geometric Quantization},
2nd ed., Oxford Mathematical Monographs, Clarendon Press, Oxford, 1991.

\end{thebibliography}
\end{document}